\documentclass[acmsmall,screen,nonacm]{acmart}

\usepackage{url}
\usepackage{xspace}
\usepackage{wrapfig}
\usepackage{hyperxmp}
\usepackage{booktabs}
\usepackage{makecell}
\usepackage{hyperref}
\usepackage{graphicx}
\usepackage{flafter}
\usepackage{colortbl}
\usepackage{multirow}
\usepackage{enumitem}
\usepackage{subfigure}
\usepackage{todonotes}
\usepackage{tablefootnote}
\floatstyle{plaintop}
\newfloat{interactioncontract}{tbp}{loc}
\floatname{interactioncontract}{Contract}
\newcommand{\contractref}[1]{Contract~\ref{#1}\xspace}
\usepackage{color, xcolor}
\usepackage{amsthm,amsmath,amsfonts}
\newcommand{\ie}{\textit{i.e.,}\xspace}
\newcommand{\eg}{\textit{e.g.,}\xspace}
\newcommand{\etc}{\textit{etc.}\xspace}
\newcommand{\etal}{\textit{et al.}\xspace}

\newcommand{\figref}[1]{Fig.~\ref{#1}\xspace}
\newcommand{\tabref}[1]{Table~\ref{#1}\xspace}
\newcommand{\secref}[1]{Section~\ref{#1}\xspace}

\newcommand{\toolname}{IIA\xspace}
\newcommand{\aisname}{AIS\xspace}

\setcopyright{none}
\acmDOI{}
\acmArticle{}
\date{10 September 2026}

\theoremstyle{acmdefinition}
\newtheorem{definition}{Definition}
\theoremstyle{acmplain}

\begin{document}

\title[Engineering Agent-Integrated Software]{Engineering Agent-Integrated Software: Interaction Contracts and Continuous Assurance}

\author{Shengcheng Yu}
\affiliation{\institution{Technical University of Munich}\city{Heilbronn}\country{Germany}\postcode{74076}}
\email{shengcheng.yu@tum.de}
\orcid{0000-0003-4640-8637}

\author{Chunrong Fang}
\authornote{Chunrong Fang is the corresponding author.}
\affiliation{\institution{State Key Laboratory for Novel Software Technology, Nanjing University}\city{Nanjing}\country{China}\postcode{210093}}
\email{fangchunrong@nju.edu.cn}
\orcid{0000-0002-9930-7111}

\author{Zhenyu Chen}
\affiliation{\institution{State Key Laboratory for Novel Software Technology, Nanjing University}\city{Nanjing}\country{China}\postcode{210093}}
\email{zychen@nju.edu.cn}
\orcid{0000-0002-9592-7022}

% \authornote{ is the corresponding author.}

\begin{abstract}

Embedding an intelligent agent in an existing application creates a persistent coordination problem: users can revise goals and manipulate shared objects while delegated execution continues. We argue that dependable integration requires an explicit correspondence between task-level interaction and application behavior. We introduce Agent-Integrated Software (AIS) as a software pattern combining a conventional core, direct interaction, and a built-in agent, and Intent-Level Interaction Abstraction (IIA) as the task semantics through which users inspect and control delegated work. An open transition-system model relates AIS execution to IIA states and events. Interaction contracts constrain this relation through task bindings, role-specific authority, control transitions, and outcome evidence; continuous assurance maintains scoped claims as their dependencies change. A compact disclosure contract and conditional propositions illustrate why local component validity is insufficient and how selected admission invariants can be separated from planning. Contrasting software domains expose the framework's assumptions and limits. This perspective develops a research agenda spanning application abstraction, development support, controlled execution, quality assessment, and human supervision, with the aim of making agent integration a maintainable software engineering discipline.

\end{abstract}

\begin{CCSXML}
<ccs2012>
  <concept>
    <concept_id>10011007.10011074.10011075.10011077</concept_id>
    <concept_desc>Software and its engineering~Software design engineering</concept_desc>
    <concept_significance>500</concept_significance>
  </concept>
  <concept>
    <concept_id>10011007.10011074.10011099.10011102.10011103</concept_id>
    <concept_desc>Software and its engineering~Software testing and debugging</concept_desc>
    <concept_significance>500</concept_significance>
  </concept>
  <concept>
    <concept_id>10011007.10010940.10010971.10010980</concept_id>
    <concept_desc>Software and its engineering~Software system models</concept_desc>
    <concept_significance>300</concept_significance>
  </concept>
  <concept>
    <concept_id>10010147.10010178.10010219.10010221</concept_id>
    <concept_desc>Computing methodologies~Intelligent agents</concept_desc>
    <concept_significance>100</concept_significance>
  </concept>
</ccs2012>
\end{CCSXML}

\ccsdesc[500]{Software and its engineering~Software design engineering}
\ccsdesc[500]{Software and its engineering~Software testing and debugging}
\ccsdesc[300]{Software and its engineering~Software system models}
\ccsdesc[100]{Computing methodologies~Intelligent agents}

\keywords{Agent-Integrated Software, Intent-Level Interaction Abstraction, Software Engineering, Human--AI Interaction, Application Interfaces, Quality Evaluation}

\maketitle

\section{Introduction}
\label{sec:introduction}

Embedding a language-model agent in an application changes the relationship between interaction and execution. A user can delegate a goal while continuing to edit the very objects on which the agent is acting. The application must then interpret two kinds of input: direct operations on its existing interface and instructions whose realization may require a sequence of inferred operations. Their coexistence creates a software engineering problem that cannot be assessed through the quality of the generated response alone. A convincing proposal, an authorized application programming interface (API) call, and a correct database update can still compose into an effect that no longer corresponds to the user's instruction.

Consider a collaboration application in which an organizer asks an agent to prepare meeting materials and send public versions to confirmed participants. The organizer inspects the proposal, replaces an attachment, and removes a recipient through the conventional graphical user interface (GUI). A delayed approval for the earlier proposal may remain authentic even though its scope is obsolete. If a delivery subsequently times out, the absence of a response does not establish whether disclosure occurred. This example brings the central difficulty into focus: the meaning of delegated work can change before, during, and after execution, while different people retain authority over the resources and consequences involved.

\label{sec:definition}
We introduce \emph{Agent-Integrated Software} (\aisname) as a software-pattern concept for this setting. \aisname integrates a conventional application core with a built-in intelligent agent, supporting both direct user interaction and goal-directed task execution. The core continues to own the application's domain objects, business behavior, and durable state. The agent is integrated into the application's functionality and lifecycle, although its model or runtime may execute remotely. The integrated agent uses a large language model or multimodal foundation model. Existing applications can adopt the pattern incrementally.

Within \aisname, the \emph{Intent-Level Interaction Abstraction} (\toolname) gives delegated work a user-facing semantics: goals, contextual references, proposals, endorsements, interventions, and outcomes. It is the abstraction through which users inspect and redirect a task, \ie the meaning of the interaction rather than the component that performs inference. Its realization can span a conversational panel, an editable preview, existing GUI controls, and runtime services. The agent plans and executes; the \toolname defines how that activity is presented and controlled; \aisname identifies the application in which both interaction paths coexist.

\label{sec:conceptual-status}
Our contribution is to make the relationship between interaction and application effects an explicit object of specification. We call this relationship an \emph{interaction--effect obligation}. It binds an effect to the task revision, referenced objects, role-specific authority, operational controller, and evidence that justifies its reported outcome. Mixed-initiative interaction and co-planning already address the distribution of initiative and revision of shared plans~\cite{horvitz1999mixed,feng2026cocoa}; intent-oriented software and agent--UI protocols address complementary software and communication concerns~\cite{xie2026intent,agui2026protocol}. \aisname and \toolname introduce an engineering boundary and semantic abstraction for the continuing application. Their contribution is the joint treatment of obligations across that boundary, independently of a particular protocol, access-control scheme, or transaction mechanism.

\label{sec:position}
\textbf{Our position is that dependable agent integration requires a maintained correspondence between revisable task intent and application behavior.} This correspondence should guide development support and quality assessment from the outset. We formulate \aisname as an open transition system and \toolname as an abstraction of its task-relevant behavior. Interaction contracts constrain the correspondence; continuous assurance records which parts remain supported as tasks, policies, and implementations change. The formulation separates architectural membership from dependability: an application can instantiate \aisname and still violate its interaction contract.

This perspective develops the argument from the semantic framework in \secref{sec:framework}, through its engineering implications in \secref{sec:challenges}, to a compact contract and assurance model in \secref{sec:quality}. The resulting research agenda connects application analysis, interface design, controlled execution, and quality assessment. Contrasting examples in collaboration, spreadsheets, integrated development environments (IDEs), and service consoles examine the scope of the argument. The contribution is a conceptual framework with conditional reasoning and worked examples; its practical benefits remain questions for comparative engineering and human studies.

\section{AIS and IIA: A Semantic Framework}
\label{sec:framework}

An architectural diagram identifies components, but it does not determine whether a user's intervention changes the behavior that those components can produce. We therefore describe \aisname at two levels: application execution and task interaction. The distinction lets us ask which implementation details can remain hidden and which must remain meaningful at the interaction boundary. The notation uses established state-machine refinement and contract reasoning~\cite{abadi1991refinement,meyer1992contract}; the proposed contribution is their application to the relationship between shared application state and revisable delegated tasks.

\subsection{AIS as an Open Application System}
\label{sec:ais-model}

\begin{definition}[AIS execution model]
An \aisname realization is represented by an open labeled transition system
\begin{equation}
\mathcal{M}=\langle X,X_0,\Sigma,\longrightarrow\rangle,
\qquad
\Sigma=\Sigma_G\uplus\Sigma_I\uplus\Sigma_A\uplus\Sigma_E.
\label{eq:ais-model}
\end{equation}
Here $X$ is the global state space, $X_0$ contains initial states, and $x\xrightarrow{a}x'$ is a possible transition. The label classes distinguish direct GUI operations ($G$), intent-level commands and feedback ($I$), agent/runtime steps ($A$), and environmental events ($E$). Labels carry the relevant principal, task, and operation identifiers. The realization retains a conventional core and direct interaction path, integrates a goal-directed agent, and supports an \toolname realization over their shared application objects.
\end{definition}

For analysis, write a state as $x=(s,t,c,j,w)$. The conventional core state $s$ includes domain objects, versions, and applicable policies. The task store $t$ records goals, revisions, contextual bindings, delegated scope, and unresolved decisions. The control state $c$ identifies who may coordinate each task. The journal $j$ associates admitted effects with their payloads and observed outcomes. Admission is the host's acceptance of a specific effect for execution; it does not by itself establish that the effect has occurred. The remaining state $w$ represents agent internals, pending messages, and relevant environmental state. This is a logical decomposition; it neither requires five services nor assumes that the agent observes the entire state. In particular, an external effect can have occurred while its outcome remains unknown in $j$.

The open-system view matters because task execution does not suspend the application. A GUI edit, another user's action, a policy revocation, or a delayed provider response can interleave with an agent step. These events are part of the behavior to be reasoned about, rather than exceptions to a sequential dialogue. Labels are tagged by their role in the model: a user-facing agent report belongs to $\Sigma_I$, whereas internal planning belongs to $\Sigma_A$. The tag does not establish trustworthiness. An agent-originated operation still requires authorization, and a direct operation still obeys the application's business rules.

Architectural membership requires the coexistence of these responsibilities, not satisfaction of a safety property. A defective integration remains an \aisname instance. Similarly, neither local model execution nor a particular chat interface is a membership condition. These distinctions keep the model applicable to incremental adoption and distributed deployments without making quality claims true by definition.

\subsection{IIA as a Task-Level Abstraction}
\label{sec:iia-model}

The \toolname abstracts how delegated work develops over time. A task record can be written as $t_\kappa=(\kappa,r,g,b,d,U)$: task identity $\kappa$, revision $r$, declared goal $g$, contextual bindings $b$, delegated scope $d$, and unresolved decisions $U$. A binding refers to a stable object and the relevant version or view, rather than only its current screen position. The goal supplies task-specific acceptance conditions, which may remain partial until decisions in $U$ are resolved. This representation makes explicit what has been agreed; it does not assume that unrestricted natural-language meaning can be converted into a complete logical predicate.

\begin{definition}[Intent-Level Interaction Abstraction]
An \toolname specification is an abstract transition system $\mathcal{I}=\langle Y,Y_0,\Lambda,\Longrightarrow,V\rangle$. Here $Y$ contains task-level states, $Y_0$ their initial states, $\Lambda$ the event labels, and $\Longrightarrow$ the permitted abstract transitions. States summarize tasks, decisions, control status, and effect knowledge; events express goal submission, revision, endorsement, intervention, admission, and outcome reporting. $V_u(y)$ is the view permitted for principal $u$. A candidate realization supplies a state abstraction $\pi:X\rightarrow Y$ and an event abstraction $\alpha:\Sigma\rightarrow\Lambda\cup\{\varepsilon\}$, where $\varepsilon$ denotes an unobservable step.
\end{definition}

The central requirement is that a concrete step have a valid task-level interpretation:
\begin{equation}
\pi(X_0)\subseteq Y_0,
\qquad
x\xrightarrow{a}x'
\ \Longrightarrow\
\pi(x)\overset{\alpha(a)}{\Longrightarrow}\pi(x').
\label{eq:abstraction}
\end{equation}
For $\alpha(a)=\varepsilon$, the right-hand side means $\pi(x)=\pi(x')$. Internal reasoning can therefore remain hidden when it changes no task-relevant fact. A changed recipient, an admitted disclosure, or an acknowledged cancellation cannot be hidden in this way when it changes the contract's task state. A GUI event affecting a delegated task must have an abstract interpretation even though it did not originate in the \toolname interface. Where a useful abstraction needs history, the analysis state may be augmented with records of earlier endorsements and effects~\cite{abadi1991refinement}.

The abstraction describes a specification relationship, not an assertion that a runtime can automatically recover $\pi$. Implementations must supply the object, task, and outcome information on which it depends. Furthermore, $y$ represents shared task semantics while $V_u(y)$ deliberately restricts what each role can inspect. An affected recipient need not see an organizer's private context. User-facing feedback is faithful when its factual claims are justified by the corresponding authorized view; completeness of explanation and usability require additional judgment.

Control requires more than observation. A request to stop and an acknowledgement that stopping took effect are distinct abstract events. For a task $\kappa$, the controller epoch $k$ identifies a generation of control authority. An acknowledged stop invalidates that generation for subsequent admissions under $(\kappa,k)$, while permitting reconciliation of earlier admissions. An interface that displays ``stopped'' while the old controller can continue admitting work violates the abstraction's control semantics. Eventual acknowledgement is a separate progress requirement, conditional on service availability and communication assumptions.

\figref{fig:ais-architecture} brings the two levels together: the conventional core and agent constitute the execution system, while the \toolname exposes its task-level meaning through the abstraction $\pi$. The lower band uses the state spaces $X$ and $Y$ defined above.

\begin{figure}[!htbp]
\centering
\includegraphics[width=\linewidth]{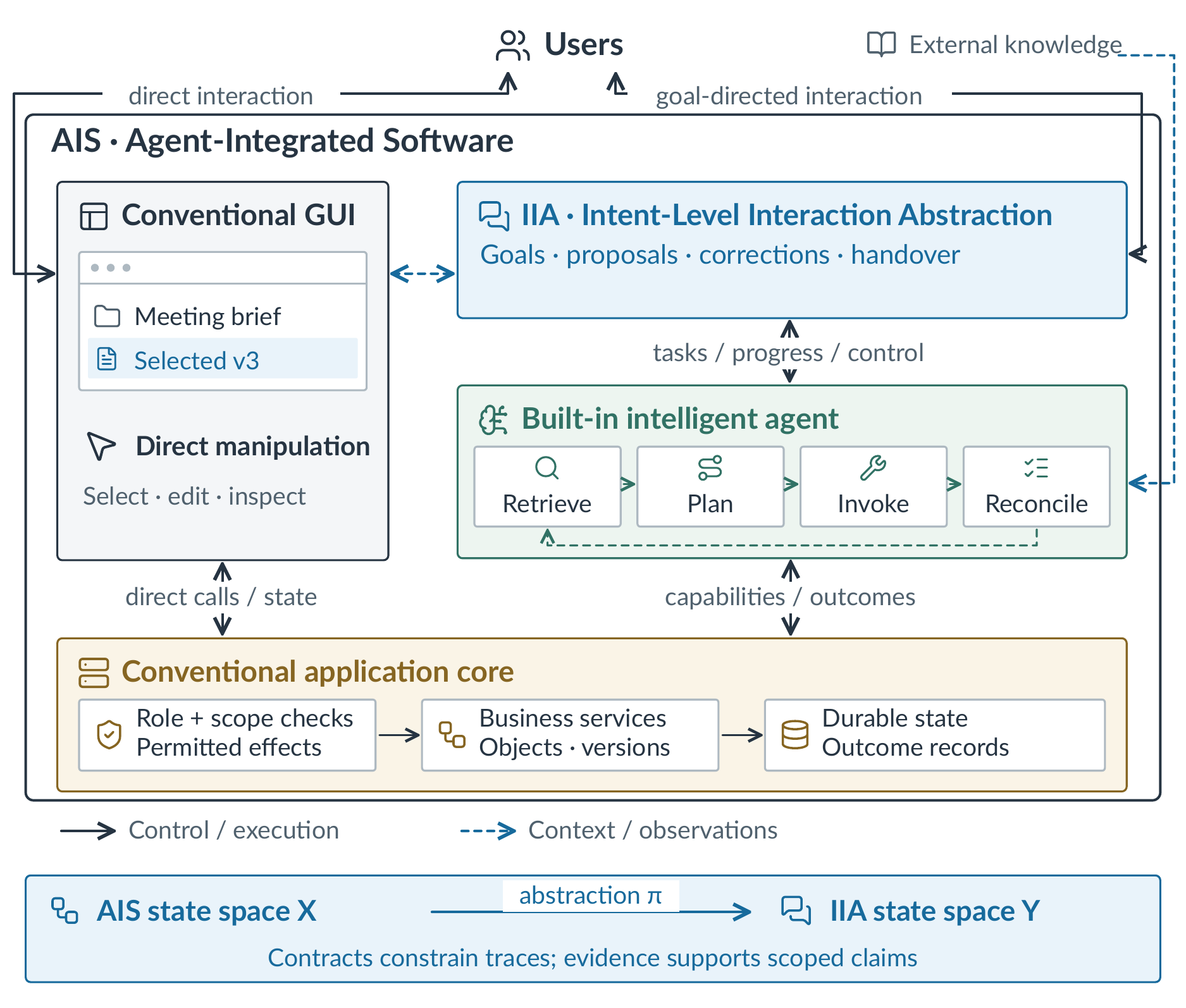}
\caption{AIS with continuing direct and intent-level interaction over a shared core. The IIA realization presents and controls task semantics; the built-in agent plans and invokes capabilities. The lower band relates the execution state space $X$ to the task state space $Y$. Control and effect records connect both levels to interaction contracts and assurance.}
\label{fig:ais-architecture}
\Description{A boundary labeled AIS contains a conventional GUI on the left, a layer realizing the Intent-Level Interaction Abstraction (IIA) above a built-in agent on the right, and a shared application core below. Users, potentially occupying different requester, owner, approver, and affected-party roles, exchange input and feedback through the two paths. Dashed arrows connect GUI context to the IIA layer and external knowledge to the agent. The agent cycles through retrieval, planning, invocation, and outcome reconciliation. The core contains permission checks, business services and versioned objects, and durable state with outcome records. A lower semantic band maps the AIS state space X to the IIA state space Y through an abstraction, with contracts constraining traces and evidence supporting claims.}
\end{figure}

\subsection{Interaction Contracts and Behavioral Conformance}
\label{sec:contract-model}

The same abstract event can have different obligations across applications. A spreadsheet commit changes shared cells; a delivery commit may disclose information outside the application. We capture these choices in an interaction contract
\begin{equation}
\mathcal{K}=\langle\mathsf{Pre},\mathsf{Step},\mathsf{Inv},
\mathsf{Post},\mathsf{Dep}\rangle.
\label{eq:contract-model}
\end{equation}
$\mathsf{Pre}$ gives operation preconditions, including authority; $\mathsf{Step}$ constrains task and control transitions; $\mathsf{Inv}$ states consistency obligations; $\mathsf{Post}$ defines what effect or report counts as fulfilling an operation; and $\mathsf{Dep}$ identifies the assumptions and versions on which these clauses depend. An implementation may realize these clauses through APIs, runtime guards, ordinary application code, and interaction design. The contract is an application-level semantic specification, independent of that choice.

For an effect $e$, its interaction--effect obligation is the relation
\begin{equation}
\Psi_e=\langle\kappa,r,\beta_e,\gamma_e,\ell_e,\epsilon_e\rangle,
\label{eq:obligation}
\end{equation}
linking the task and revision to reviewed payload/object bindings $\beta_e$, role-specific authority $\gamma_e$, controller identity and epoch $\ell_e$, and available outcome evidence $\epsilon_e$. The relation is temporal. Authority is checked at admission; later evidence determines which outcome can be reported. Revising a task can invalidate future admissions without erasing the attribution of an earlier committed effect. Consequently, consistency cannot be reduced to equality between the latest task state and every historical record.

Let $\mathsf{Tr}$ denote finite execution traces and let $\Pi$ be the trace projection induced by $\pi$ and $\alpha$, with unobservable repetitions removed. For stated environmental assumptions $H$, the proposed conformance obligation is
\begin{equation}
\Pi\bigl(\mathsf{Tr}(\mathcal{M}\mid H)\bigr)
\subseteq\mathsf{Tr}(\mathcal{I}\mid\mathcal{K}).
\label{eq:conformance}
\end{equation}
The right-hand side contains abstract traces permitted by the contract. This finite-trace condition addresses valid transitions and claims about observed effects. It does not establish eventual task completion: a runtime that remains silent can satisfy some safety clauses while failing a progress requirement. Such requirements need explicit fairness, availability, and deadline assumptions. Nor does conformance establish that the declared goal adequately represents the user's needs.

\begin{proposition}[Local validity does not establish interaction conformance]
Valid component calls and authenticated permissions are insufficient to establish Equation~\eqref{eq:conformance} under a contract requiring current endorsement bindings when task-relevant state can change before effect admission.
\end{proposition}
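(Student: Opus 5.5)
The plan is to argue by counterexample. I will build a small AIS realization $\mathcal{M}$ and a contract $\mathcal{K}$ with three properties. Every component call in $\mathcal{M}$ is locally valid. Every admitted effect carries an authenticated permission. Yet some finite trace of $\mathcal{M}\mid H$ projects under $\Pi$ to an abstract trace outside $\mathsf{Tr}(\mathcal{I}\mid\mathcal{K})$. Insufficiency means exactly that the local conditions do not imply Equation~\eqref{eq:conformance}, so a single such witness is enough. I will base it on the disclosure scenario from \secref{sec:introduction}.

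\textbf{Construction.} The core state $s$ holds two objects: an attachment $o$ with versions $v_1,v_2$, and a recipient list $R$ with versions $R_1\supsetneq R_2$. The task $\kappa$ starts at revision $r$, with bindings $b=\{o\mapsto v_1,R\mapsto R_1\}$.

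$\mathcal{K}$ contains one $\mathsf{Pre}$ clause for the delivery effect $e$. Admission of $e$ under $(\kappa,r')$ requires an endorsement whose bindings $\beta_e$ match the task's current bindings at admission time. Equivalently, the endorsed revision must equal the current revision.

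The concrete trace I use is:
\begin{enumerate}
\item the agent proposes $e$ with $\beta_e=b$ (label in $\Sigma_A$, then $\Sigma_I$);
\item the organizer endorses, producing an authenticated approval token for $\beta_e$ ($\Sigma_I$);
\item the organizer replaces $o$ with $v_2$ and removes a recipient via the GUI ($\Sigma_G$), which updates $s$ and moves the task to revision $r+1$;
\item the runtime admits $e$ using the still-authentic token ($\Sigma_A$).
\end{enumerate}

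Each step is checked locally. The GUI edits obey the core's business rules. The delivery API call is well-formed. The token verifies cryptographically and names a principal who holds send authority under the policy in $s$. This realizes both ``valid component calls'' and ``authenticated permissions.''

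\textbf{Projection and violation.} I will define $\pi$ and $\alpha$ in the natural way. The GUI edits map to a revision event, as Equation~\eqref{eq:abstraction} requires: they change task-relevant bindings, so they cannot be $\varepsilon$. The admission step maps to an abstract admission event at revision $r+1$ carrying bindings $b$. The projected trace therefore contains an admission whose endorsed bindings differ from the current ones, and the $\mathsf{Pre}$ clause rules it out. Because $\mathsf{Tr}(\mathcal{I}\mid\mathcal{K})$ consists of finite traces, a violation at this finite prefix already suffices. No progress or fairness assumptions are involved, so the result holds for any $H$ that permits GUI edits to interleave with agent steps. The hypothesis ``task-relevant state can change before effect admission'' is exactly this permission.

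\textbf{Main obstacle.} The delicate step is stating precisely what ``local validity'' quantifies over, so that the counterexample is not trivially excluded. If the permission check were allowed to consult the task store $t$ and compare revisions, it would coincide with the contract clause. So I will fix local validity to mean predicates over a single component's inputs and the core policy state $s$: API well-formedness, signature and authority checks. I will then observe that the state before and after step 3 is indistinguishable to these predicates on the call of step 4, because the token and the principal's authority are unchanged. A second check confirms that the conclusion holds for every faithful abstraction and not just one convenient $\pi$. Any $\pi$ satisfying Equation~\eqref{eq:abstraction} must expose the revision, so the mismatch survives projection.
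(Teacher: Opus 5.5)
Your proposal is correct and is essentially the paper's own argument: a single counterexample in which an authentic endorsement of the original proposal becomes stale after a GUI revision removes a recipient, while the subsequent send still passes its schema and ordinary permission checks yet projects to an admission that violates the contract's current-binding clause. What you add---restricting ``local validity'' to predicates that cannot consult the task store $t$, and noting that the violation occurs on a finite prefix so no progress assumptions are needed---makes explicit what the paper's brief proof leaves implicit; the extra attachment-version change is harmless but unnecessary.
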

\begin{proof}
Take a proposal for recipients $\{a,b\}$ and an authentic endorsement of that proposal. Before admission, a GUI revision removes $b$. A send operation for $b$ can still satisfy its API schema and the caller's ordinary resource permission. Its projected trace nevertheless contains an admission inconsistent with the current task binding. Thus component validity permits a trace excluded by the interaction contract.
\end{proof}

The proposition locates the contribution of the framework. The missing relation crosses the boundaries of interface state, task interpretation, authority, and effects. Access control and transaction mechanisms remain essential, but their successful composition must be established with respect to that relation. A different mechanism that already preserves it is an alternative realization of the obligation.

\subsection{Foundations and Derivation of the Engineering Agenda}
\label{sec:evolution}
\label{sec:derivation}

The development summarized in \figref{fig:ais-evolution} explains why this relation is now exposed more broadly. Direct manipulation and mixed initiative preserve user participation~\cite{shneiderman1983direct,horvitz1999mixed}; demonstration and task shortcuts connect requests to existing application operations~\cite{li2017sugilite,arsan2021savant}; model-based planning and tool use support less predetermined procedures~\cite{yao2023react,schick2023toolformer,li2023apibank,wen2024autodroid,zhang2025ufo2}. Studies of product copilots and software built with foundation models (FMware) identify the resulting integration and lifecycle demands~\cite{parnin2025copilot,rajbahadur2026fmware}. The framework concentrates these developments into one question: which task-level relationships must survive changes in the underlying execution?

\begin{figure}[!htbp]
\centering
\includegraphics[width=\linewidth]{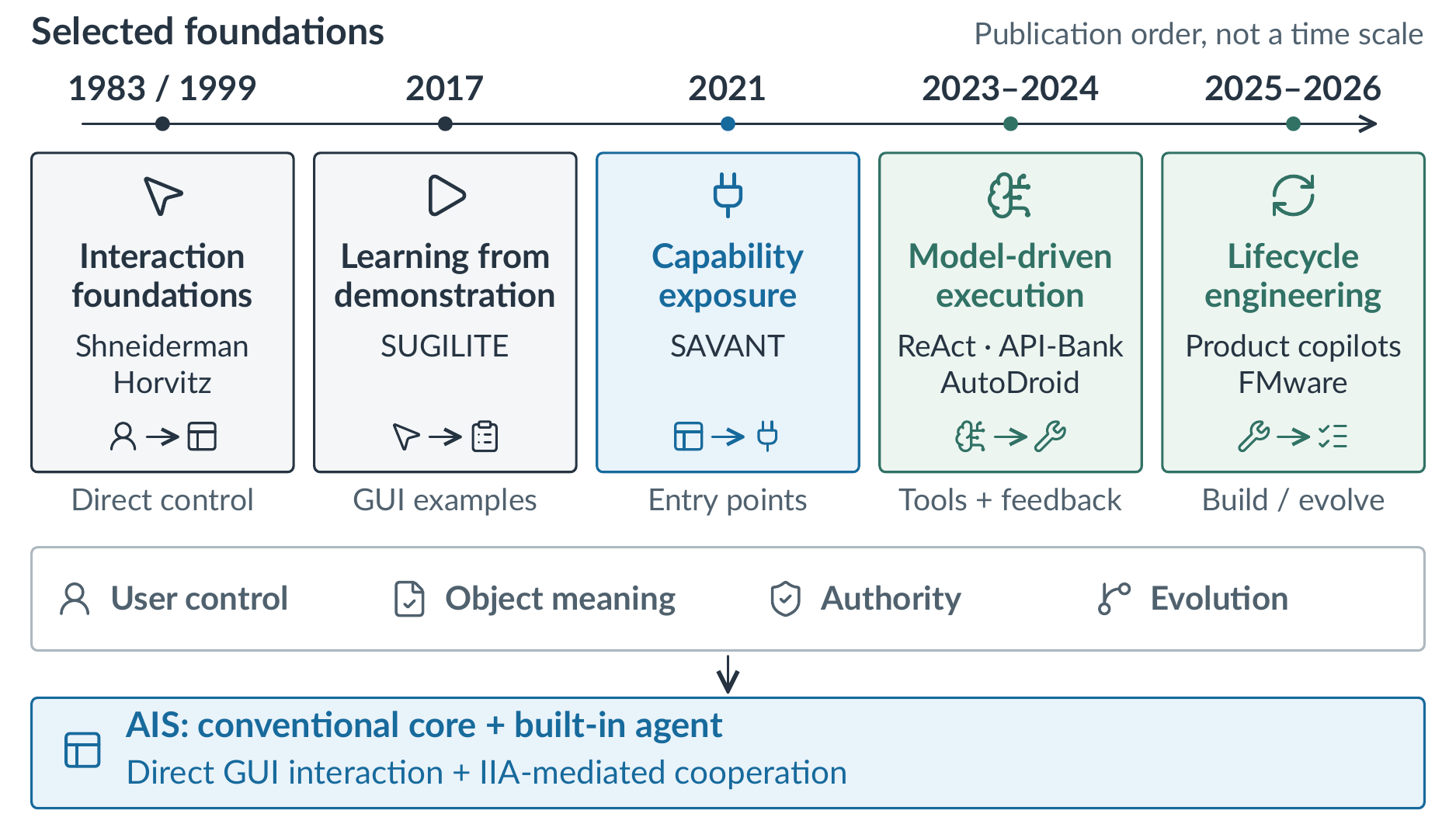}
\caption{Selected foundations of AIS: interaction~\cite{shneiderman1983direct,horvitz1999mixed}, demonstration~\cite{li2017sugilite}, capability exposure~\cite{arsan2021savant}, model-driven execution~\cite{yao2023react,li2023apibank,wen2024autodroid}, and lifecycle engineering~\cite{parnin2025copilot,rajbahadur2026fmware}. Publication groups indicate overlapping foundations, not replacement or a time scale.}
\label{fig:ais-evolution}
\Description{Five groups of publications run from direct manipulation in 1983 and mixed-initiative interaction in 1999 through demonstration in 2017, capability exposure in 2021, model-driven execution in 2023 and 2024, and lifecycle engineering in 2025 and 2026. Each group has a mechanism icon and representative publications. User control, object meaning, authority, and evolution remain common concerns. A final band defines AIS through a conventional core, built-in agent, and direct and IIA-mediated interaction.}
\end{figure}

The challenge areas follow from perturbing the terms in this framework. Changes to $g$, $b$, and $d$ expose task interpretation, interface, and context obligations; changes to $c$, authority in $s$, and effects in $j$ expose continuity, permission, and recovery obligations. Uncertainty about their observation or evolution exposes traceability and maintenance obligations. The next section develops these implications, followed by design principles and a research agenda. Their connections are synthesized at the end of the agenda. This organization is a retrospective design rationale, not an empirical taxonomy or an exhaustive derivation. Overlap is expected because a single object change can affect several relations.

\section{Engineering Implications of the Framework}
\label{sec:challenges}

The semantic framework shifts the engineering question from whether an agent can invoke an operation to whether its evolving execution remains a valid realization of the task. This section develops eight engineering implications. Each concerns information or control that crosses a component boundary; their separation identifies where development support and assurance are needed.

\subsection{C1: Task Abstraction and Degrees of Autonomy}
\label{sec:autonomy}

A task becomes actionable through the relationship between its declared goal $g$, delegated scope $d$, and unresolved decisions $U$. ``Prepare the materials'' can authorize retrieval and proposal construction while leaving disclosure undecided. Substituting a public version for a private attachment may satisfy a release policy but alter the user's intended result. The \toolname must preserve that distinction throughout the task, rather than allowing an inferred plan to stand in for an endorsement.

Autonomy is consequently a property of permitted transitions for a task and role, not a fixed level assigned to an agent. Participation in planning can improve the opportunity for correction~\cite{he2025plan}, but a review is consequential only if subsequent admissions remain bound to it. Requiring approval for every internal step can increase effort without narrowing the relevant effects. The design problem is to expose decisions that change the task's acceptance conditions or authority while allowing factual retrieval and reversible preparation within the established scope. This connects formal control semantics to the supervision question developed in \secref{sec:discussion-control}.

\subsection{C2: Bidirectional Interfaces}
\label{sec:interfaces}

The two abstraction levels require complementary interfaces. The application supplies capabilities through which the agent observes and changes domain state; the agent runtime supplies task-control services through which the application realizes the \toolname. Their contracts meet at task identity, contextual bindings, and effect evidence. \tabref{tab:interfaces} summarizes the division. An API can be callable without exposing enough information to relate its effects to a reviewed task, and an event stream can be well formed without establishing whether an operation committed.

\begin{table}[!htbp]
\caption{Complementary interfaces and the semantics needed to relate task interaction to application behavior.}
\label{tab:interfaces}
\centering
\small
\begin{tabular}{@{}>{\raggedright\arraybackslash}p{0.23\linewidth}>{\raggedright\arraybackslash}p{0.36\linewidth}>{\raggedright\arraybackslash}p{0.35\linewidth}@{}}
\toprule
Interface provider & Services available to the other side & Semantics to make explicit \\
\midrule
Application core to agent & Discover capabilities; read context and objects; subscribe to changes; validate authority; execute and verify operations. & Object identity and version, scope, prerequisites, side effects, errors, and retry behavior. \\
Agent runtime via the \toolname layer & Accept goals and context updates; expose progress and proposals; request input; return outcomes; suspend, cancel, or resume tasks. & Task identity and revision, pending user decisions, confirmed effects, and cancellation limits. \\
\bottomrule
\end{tabular}
\end{table}

API selection and structured invocation address the first part of this problem~\cite{qin2024toolllm,xie2025droidcall}. Semantic compatibility additionally requires stable meanings for preconditions, effects, cancellation, and outcomes. A schema-preserving change to a tool can therefore break the conformance relation. Conversely, an implementation of the Agent--User Interaction (AG-UI) protocol can transport state, edited approvals, interrupts, and resume events~\cite{agui2026protocol}, while the host supplies the admission checks and authoritative outcomes. The framework makes this division explicit so that protocol conformance is not mistaken for application-level conformance.

\subsection{C3: Context, Knowledge, and Memory}
\label{sec:context}

Context connects a task to an application state that may outlive the utterance that referred to it. The binding $b$ must distinguish object identity from position, a current value from an observed version, and an explicit instruction from an inferred preference. A screenshot can locate a document yet omit the identity needed to detect that the document was replaced. Cross-platform replay work by Yu \etal~\cite{yu2021lirat} illustrates the difficulty of preserving targets as interfaces change; preserving the intended business effect requires the additional task relation.

Personalization and retrieval extend these dependencies beyond the current screen~\cite{cai2025personalwab,shi2026tauknowledge}. A remembered preference can conflict with the current request, and an updated source can invalidate a derived summary. The challenge is to retain enough provenance for selective refresh and correction without turning memory into unrestricted retention. Read authority also differs from disclosure authority. The \toolname view $V_u$ and any transferred task summary must respect purpose and recipient scope even when the agent was permitted to read the underlying source.

\subsection{C4: Interaction Continuity and Shared State}
\label{sec:continuity}

Continuity requires a stable task meaning across changes in interface, controller, and execution location. \figref{fig:ais-coordination} illustrates how a GUI edit changes the admissible continuation of a delegated task. The important event is the change in the reviewed binding, not whether the correction arrived through chat or direct manipulation. Treating the two paths as separate sessions would lose precisely the relationship that the \toolname abstracts.

\begin{figure}[!htbp]
\centering
\includegraphics[width=\linewidth]{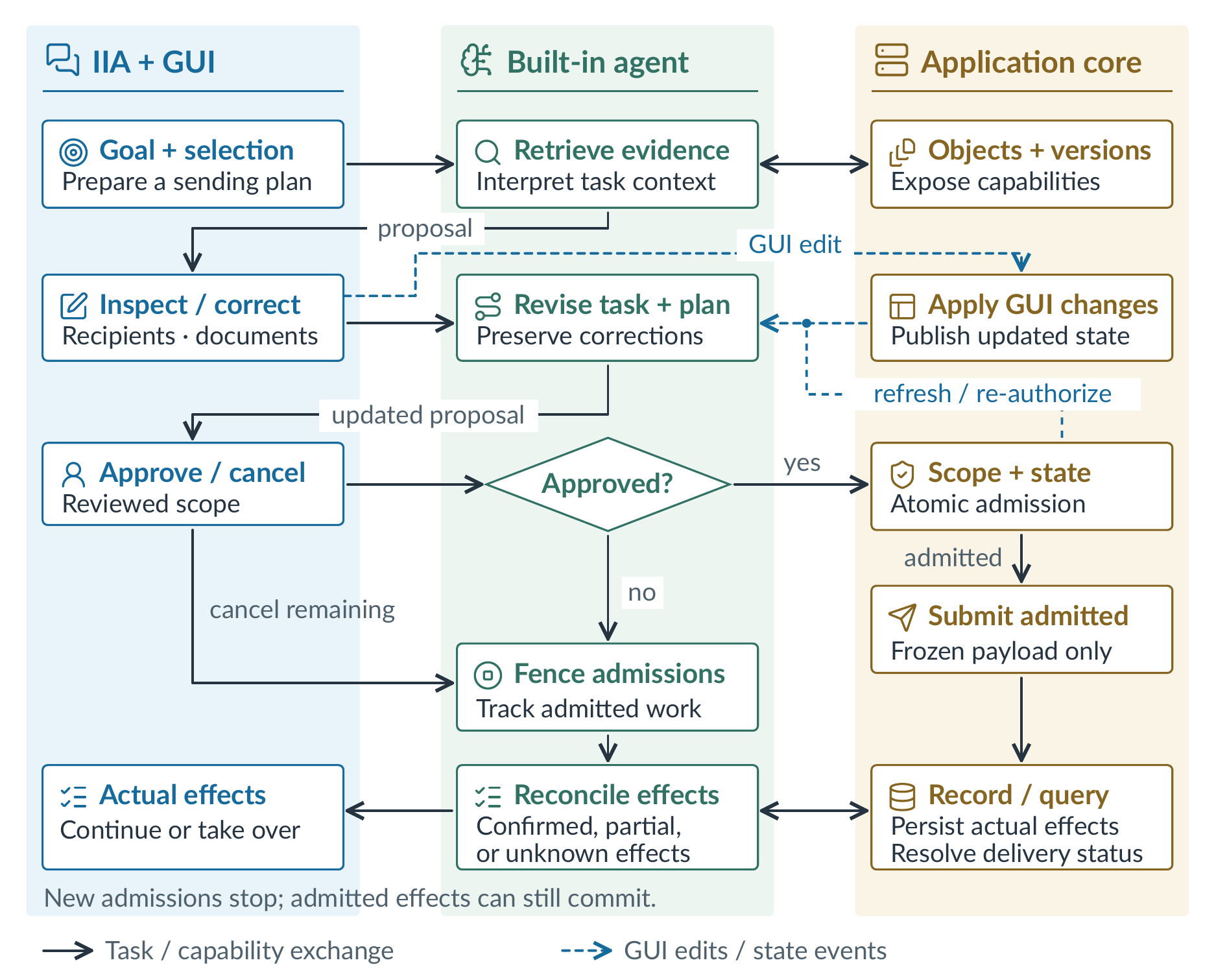}
\caption{Task revision and guarded execution across IIA/GUI interaction, agent coordination, and core operations. GUI edits revise the pending binding; changed conditions require refresh and, where necessary, renewed endorsement. An acknowledged stop fences new admissions while earlier admissions remain subject to outcome reconciliation. Each responsibility lane can involve multiple principals.}
\label{fig:ais-coordination}
\Description{The user selects materials and requests a sending plan. The agent retrieves evidence and presents a proposal. User corrections and GUI edits feed task revision and updated application state. The user then approves or cancels. Approved calls go to core scope and state checks before execution; changed conditions return to task revision. Non-approval and acknowledged cancellation lead to fencing new admissions and tracking admitted work. The core records or queries actual effects, the agent reconciles confirmed, partial, or unknown outcomes, and the IIA presents results and handover options. Cancellation does not automatically undo committed effects.}
\end{figure}

The term \emph{handover} covers two distinct changes: \emph{operational takeover by the same user} (H1) and \emph{transfer of responsibility to another person} (H2). \emph{Remote continuation or relocation} (H3) and \emph{recovery after failure} (H4) may accompany either, but do not themselves imply a change of responsible person. \tabref{tab:handover} compares these four cases. A controller lease records which runtime holds task-control authority and the conditions under which that authority expires. The four cases share a continuity record containing task revision, unresolved decisions, relevant object versions, grants, controller lease/epoch, and admitted effects with known outcomes. The record supports continuity only when the host can establish which controller may next admit work.

\begin{table}[!htbp]
\caption{Four continuity cases sharing a task record but requiring different authority and recovery semantics.}
\label{tab:handover}
\centering\small
\begin{tabular}{@{}>{\raggedright\arraybackslash}p{0.24\linewidth}>{\raggedright\arraybackslash}p{0.31\linewidth}>{\raggedright\arraybackslash}p{0.39\linewidth}@{}}
\toprule
Case & What changes & Required special handling \\
\midrule
H1: user takes over & Operational control moves from agent to the same user's direct interaction. & Fence agent admissions; show admitted/unknown effects and a continuation point. User edits invalidate affected plans, not the whole application. \\
H2: task transfers to another person & The accountable controller or task requester changes. & Revalidate the recipient's own rights and required endorsements; minimize transferred context; preserve attribution. Credentials and approvals are not transferred implicitly. \\
H3: remote continuation or relocation & The same task continues after UI disconnection, or moves between runtimes. & Persist task/effect identities. Reconnection reads authoritative state. Relocation establishes a new executor lease; client loss alone does not imply cancellation. \\
H4: recovery after failure & A runtime restarts with potentially incomplete observations. & Fence stale workers, rebuild from the journal, query unknown effects before resubmission, and disclose irrecoverable evidence gaps. \\
\bottomrule
\end{tabular}
\end{table}

A transfer therefore combines a semantic obligation with an enforcement obligation. The successor needs a rights-filtered account of what remains to be done, while the old controller must lose the ability to admit further effects under its former authority. Lost acknowledgements leave ownership unresolved; late provider responses leave effect knowledge incomplete. Neither uncertainty can be removed by copying a conversation. A closed client window is similarly compatible with an active remote task. These cases require different continuations even though they share a task identifier.

\subsection{C5: Permission, Safety, Security, and Privacy}
\label{sec:security}

Delegation crosses a relation among principals, rather than a single user--agent permission boundary. The initiator requests an outcome; resource owners govern the objects; approvers endorse particular effects; affected parties receive their consequences; and an executor/controller coordinates the work. A person may occupy several roles, but their authorities do not merge automatically. In the document scenario, an organizer's request does not supersede a team's release policy or a disclosure officer's approval.

For an effect $e$ in state $x$, the authority condition is conjunctive:
\begin{equation}
\begin{split}
\operatorname{permit}(e,x)={}&\operatorname{taskGrant}(e,x)
\land\operatorname{resourcePolicy}(e,x)\\
&{}\land\operatorname{requiredEndorsements}(e,x)
\land\operatorname{executorScope}(e,x).
\end{split}
\label{eq:roles}
\end{equation}
Attribute-based access control supplies established mechanisms for subject, resource, action, and environment conditions~\cite{hu2014abac}. The framework additionally relates these conditions to the current task binding. Owner consent may be represented by a standing grant; affected-party consent and separation of duties become required predicates where the domain policy demands them. Changes to ownership, recipients, account, or controller require revalidation of the dependencies concerned.

An endorsement must also have an origin distinct from agent inference. Execution delegation cannot include the ability to write the approval ledger or activate an indistinguishable approval input. Otherwise, an authenticated session establishes an account identity but not a person's endorsement of the particular effect. The distinction between data and authority is equally important for retrieved material. Prompt-injection defenses and runtime constraints provide relevant enforcement techniques~\cite{debenedetti2024agentdojo,debenedetti2025camel,wang2026agentspec}; their scope depends on complete mediation across reachable execution paths~\cite{saltzer1975protection}. An unrestricted alternative shell or database connection defeats a claim covering only guarded business APIs.

Safety, security, and privacy impose related but different obligations. A permitted action can still be mistaken; an adversarial action can redirect control; an otherwise legitimate result can disclose excessive information. Role-specific views, logs, and derived memory therefore need their own disclosure and retention conditions. Preserving attribution during a handover does not imply transferring the previous controller's credentials or private context.

\subsection{C6: Execution Semantics and Failure Recovery}
\label{sec:recovery}

The distinction between application state $s$ and effect knowledge in $j$ is essential under failure. A timeout can leave the runtime unable to distinguish non-execution from an already committed effect. Treating both as failure permits duplicate execution; treating both as success produces misleading feedback. Stable effect identifiers and authoritative status queries make that uncertainty manageable, while the contract determines whether replay is safe.

Cancellation changes future admissibility at a particular boundary. It does not reverse an effect already admitted to an external provider. Compensation introduces a new domain operation with its own authority and failure conditions, as in sagas~\cite{garciamolina1987sagas}. A calendar invitation can be withdrawn, but its notification may already have been read. A spreadsheet undo can restore prior values yet incorrectly overwrite a collaborator's later edit. Recovery must therefore preserve the relation among original effects, subsequent changes, and new compensating actions; a generated inverse command is insufficient.

\subsection{C7: Observability, Debugging, and Responsibility}
\label{sec:observability}

The abstraction provides a natural basis for diagnosis: a failure occurs where a concrete trace no longer has the task-level interpretation promised by the contract. Agent debugging and lifecycle observability offer useful inspection mechanisms~\cite{epperson2025debugging,dong2024agentops}. For \aisname, traces must additionally connect task revisions and endorsements to host admissions and outcome records. Otherwise, an explanation can describe why the agent acted without establishing whether that action was authorized or what it changed.

Different readers need different projections of this evidence. Users need the effects that occurred and the decisions that remain. Developers need the dependency or transition responsible for a mismatch. The evidence includes approvals, admission witnesses, receipts, \etc A model-generated explanation can help interpret those records but cannot replace them. Replay also needs controlled versions and external state before it can support causal diagnosis. This separation makes evidence useful for accountability while leaving institutional responsibility to the relevant organizational arrangements.

\subsection{C8: Architecture, Joint Evolution, and Operating Costs}
\label{sec:evolution-challenge}

The correspondence between $\mathcal{M}$ and $\mathcal{I}$ can change even when visible API types do not. A model replacement changes possible plans; a host update changes object meaning; an adapter revision changes admission or reporting behavior. Experience with learned-system engineering and technical debt motivates explicit management of such dependencies~\cite{amershi2019seml,sculley2015debt,rajbahadur2026fmware}. The framework makes their consequence precise: some previously supported traces or assurance claims may no longer describe the revised implementation.

This raises an architectural choice for existing applications. Stable references, guarded writes, task-control hooks, and outcome records have development and maintenance costs. Integrating an agent without those services may still offer useful preparation or retrieval, but supports weaker effect guarantees. Deployment further changes latency, information exposure, and continuity assumptions. A viable integration must therefore be judged against total development, computation, supervision, and recovery effort, including the conventional workflow it complements.

\section{Interaction Contracts and Continuous Assurance}
\label{sec:quality}

The framework makes a distinction that conventional task-success measures can obscure. An execution may reach the requested final state through an unauthorized intermediate effect, or preserve every authorization condition while failing to achieve a useful outcome. Dependability therefore requires both a valid interaction history and an adequate task result. Interaction contracts state selected obligations over that history; assurance identifies the evidence for believing they hold in a particular realization.

\subsection{Quality Across the Abstraction Boundary}
\label{sec:quality-model}

At the \toolname level, quality concerns the fidelity and usability of task expression, proposals, interventions, and feedback. At the \aisname level, it additionally concerns planning, domain effects, authority, availability, and evolution. Existing software and AI-system quality models supply broader terminology~\cite{iso25010,iso25059}; \tabref{tab:quality} specializes the concerns to the abstraction boundary. A correct update with misleading feedback and a clear preview followed by an incorrect update are distinct failures of the same relationship.

\begin{table}[!htbp]
\caption{Quality concerns across the IIA abstraction and its AIS realization. Primary loci identify the information or mechanisms needed to assess each concern.}
\label{tab:quality}
\centering
\small
\begin{tabular}{@{}>{\raggedright\arraybackslash}p{0.28\linewidth}>{\raggedright\arraybackslash}p{0.24\linewidth}>{\raggedright\arraybackslash}p{0.42\linewidth}@{}}
\toprule
Quality concern & Primary locus & Example of a failure \\
\midrule
Intent and user control & \toolname layer and agent & Ignore a recipient corrected through the GUI. \\
Grounded planning & Agent and context services & Use an obsolete conversation as the basis for a plan. \\
Business-effect correctness & Core and agent runtime & Repeat a write after an ambiguous timeout. \\
Feedback fidelity & \toolname layer and core evidence & Report delivery before the core confirms it. \\
Consistency and reliability & \aisname across both paths & Resume from task state that disagrees with application state. \\
Safety, security, and privacy & \aisname across trust boundaries & Treat retrieved text as authority to disclose data. \\
Availability and efficiency & \aisname deployment & Lose the direct-operation fallback when the model is unavailable. \\
Maintainability and observability & \aisname lifecycle & Silently lose a capability after an interface update. \\
\bottomrule
\end{tabular}
\end{table}

Task acceptance must also depend on the authorized stage. A preparation request can succeed with an inspectable proposal; an execution request can appropriately lead to clarification or justified non-execution. Such outcomes cannot be evaluated by counting completed writes. Conversely, conformance to permission and control clauses does not establish that a summary is accurate or a refactoring useful. The framework separates these judgments so that strong evidence for one property cannot silently stand in for another.

The placement of responsibility follows the location of the required knowledge. Object identity and commit evidence belong at the host boundary; task revision and endorsement must remain connected to them; user-facing explanations must draw on the resulting records. This follows the spirit of end-to-end reasoning about application-specific correctness~\cite{saltzer1984endtoend}. It also explains why a single aggregate quality score is inadequate: improved response fluency cannot compensate for an unauthorized disclosure, and fewer interruptions do not necessarily improve user control.

\subsection{A Compact Contract for Reviewed Disclosure}
\label{sec:contract}
\label{sec:conventions}

\contractref{ct:share} instantiates $\mathcal{K}$ for the meeting-materials scenario. Its compact form separates the scenario's binding and authority from reusable admission, control, and reporting clauses. The values are symbolic: task $T42$ at revision 7, controller epoch 3, a public document view at version 12, and recipients $a$ and $b$. $H7$ identifies the reviewed view and $D7$ its canonical proposal digest. The contract gives semantic obligations rather than prescribing a configuration language or a runtime architecture.

The contract's invariants are expressed by four clauses. \textbf{K1, endorsement binding}, relates an admitted payload to the current task revision, reviewed objects/views, recipients, and effect scope. \textbf{K2, current authority}, requires the conjunction in Equation~\eqref{eq:roles} at admission. \textbf{K3, controlled admission}, requires running task state, an authenticated controller lease, and the active epoch, with the acceptance decision recorded atomically. \textbf{K4, outcome fidelity}, permits a committed report only when authoritative evidence establishes the specified effect. These clauses specialize $\mathsf{Pre}$, $\mathsf{Inv}$, and $\mathsf{Post}$; the revision and control rules specialize $\mathsf{Step}$.

\begin{interactioncontract}[!htbp]
\caption{Reviewed document disclosure, \texttt{share-materials/v1}.}
\label{ct:share}
\centering\small
\begin{tabular}{@{}>{\raggedright\arraybackslash}p{0.19\linewidth}>{\raggedright\arraybackslash}p{0.76\linewidth}@{}}
\toprule
Binding & $\kappa=T42$, $r=7$, $k=3$; document 7, version 12, public view $H7$; recipients $\{a,b\}$; frozen proposal $D7$. Only reviewed materials and recipients may be delivered. \\
Authority & Organizer initiates; team owns the document; event chair and disclosure officer approve; recipients are affected; agent service acts for the organizer under lease $L3$. Owner grant $G18$ and policy \texttt{policy-v4} apply. \\
Endorsement & A host ledger records endorsements through a non-delegated channel, bound to $(\kappa,r,D7,G18,\texttt{policy-v4})$ and required roles. Recheck scope, revocation, and expiry; task endorsement expires no later than ten minutes after recording. \\
Preparation & Read authorized context and public views; preview without delivery. Material revision increments $r$ and invalidates prior task endorsements. Complete, valid approval enables running; unspecified operations or transitions are rejected. \\
Admission & Atomically check K1--K3 and journal a stable effect ID for each recipient. Submit only its frozen payload, using the same provider idempotency key when safe to retry. Status queries obey read permissions. \\
Control & Cancel atomically stops the task, revokes task endorsement, and advances $k$. H1/H2 fence the old lease and require revalidation and acceptance; H3/H4 revalidate delegation and reconcile before resuming affected writes. Preserve unresolved effect IDs. \\
Outcome & K4 governs every report. Distinguish not admitted, admitted, submitted, committed, failed, and unknown. Commitment means provider-recorded delivery to the named inbox; submission acceptance alone is insufficient. \\
Dependencies & Contract, host, adapter, policy, role grants, and controller state. An unknown submission is queried before replay; cancellation fences new admissions while earlier admitted effects may still commit. \\
\bottomrule
\end{tabular}
\end{interactioncontract}

Standing owner grants remain usable after revalidation; revising a task does not itself revoke them.

Let $B_e$ be the reviewed binding for effect $e$, $A_e$ its required endorsements, and $\ell_e$ its authenticated controller lease with epoch $k_e$. For state $x$ immediately before admission,
\begin{equation}
\begin{split}
\operatorname{admit}(e,x)\Longrightarrow{}&
\operatorname{current}(B_e,x)\land\operatorname{endorsed}(A_e,B_e,x)\\
&{}\land\operatorname{permit}(e,x)\land\operatorname{running}(\kappa,x)\\
&{}\land\operatorname{validLease}(\ell_e,x)
\land k_e=k_{\mathrm{active}}(\kappa,x).
\end{split}
\label{eq:admission}
\end{equation}
The gate evaluates this predicate and records the admitted payload within one host-controlled atomic boundary. Merely supplying the latest epoch number does not authenticate a controller. Likewise, retaining a conversational thread does not establish a current endorsement. A stable effect ID cannot be rebound to a different payload, and an unresolved submission cannot become a supposedly new action by receiving a fresh ID after a task revision.

\begin{proposition}[Planner-independent admission safety]
\label{prop:gate}
Suppose all admissions in the contract's scope pass through a trusted gate that atomically enforces Equation~\eqref{eq:admission}, the bindings and authority records are authoritative, and only this gate can append immutable admission records to an initially valid journal. Every admitted effect then satisfies K1--K3 at its admission point, independently of how the planner selected it.
\end{proposition}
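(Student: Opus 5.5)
The plan is an invariant argument over the journal, by induction on the length of finite traces of $\mathcal{M}$ under the stated assumptions $H$. The invariant is that every admission record in $j$ was appended by the gate in a state $x$ for which the right-hand side of Equation~\eqref{eq:admission} held, with the recorded payload equal to the one checked. The base case is the hypothesis that the journal is initially valid.

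For the inductive step, I would split on the label class of the transition $x\xrightarrow{a}x'$.

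If $a\in\Sigma_G\cup\Sigma_I\cup\Sigma_E$, or $a$ is an agent step in $\Sigma_A$ other than a gate admission, then $a$ cannot append admission records, because only the gate can. Since records are immutable, $a$ also cannot alter existing records. Such steps may change $s$, $t$, or $c$, so a binding may stop being current afterwards. The invariant still survives, because it is indexed to the admission point, not the latest state. This matches the paper's remark that consistency is not equality with the latest task state.

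If $a$ is a gate admission, atomicity means no interleaved step separates evaluating the predicate in $x$ from appending the record. The record therefore reflects exactly the state in which the predicate held. The invariant then extends to the new record.

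Next I would read off K1--K3 from the conjuncts:
\begin{itemize}
\item $\operatorname{current}(B_e,x)\land\operatorname{endorsed}(A_e,B_e,x)$ gives K1, once the binding and authority records are authoritative, so that these predicates denote the true task revision, views, recipients and endorsements rather than agent-supplied copies.
\item $\operatorname{permit}(e,x)$ is literally the conjunction of Equation~\eqref{eq:roles}, so it gives K2.
\item $\operatorname{running}$, $\operatorname{validLease}$ and the epoch equality, together with atomic recording, give K3.
\end{itemize}

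Planner independence comes from complete mediation. Every admitted effect in scope reaches the journal only through the gate, and the gate's decision depends only on $x$ and the proposed payload. Any planner behaviour, including adversarial choices, can only change which candidates are submitted, not whether they pass. Formally, I would quantify over arbitrary $\Sigma_A$ behaviour and note that the argument never used it.

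The main obstacle is pinning down the quantifier ``every admitted effect'' so that it coincides with ``every admission record.'' Two things could break this: an admission might bypass the gate, or a record might be reused or rebound to another payload. I would handle this first by taking the scope assumption as the statement that admissions are exactly gate appends. I would then use immutability, together with the stable effect ID not being rebindable, to exclude rebinding. I would also state explicitly that the conclusion is pointwise at admission and makes no claim about K4 or about progress.
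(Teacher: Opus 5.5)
Your proposal is correct and follows essentially the paper's route: induction over finite transition sequences from the initially valid journal, where non-appending steps preserve the property, the atomic gate step establishes Equation~\eqref{eq:admission} for each new record, and later state changes leave what held at earlier admission points untouched. Your conjunct-by-conjunct reading of K1--K3 and your explicit identification of admitted effects with gate appends spell out what the paper's proof leaves implicit; phrase the invariant as ``each record satisfied the predicate at its admission point'' rather than ``was appended by the gate,'' so that it also covers records already present in the initial journal.
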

\begin{proof}
The initial journal satisfies the property by assumption. A transition that appends no admission preserves it. A transition that appends one first establishes the admission predicate and binds the accepted payload to that record atomically. The property therefore holds for every finite sequence of admissions. Subsequent policy changes affect later decisions without changing what held at an earlier admission point.
\end{proof}

This limited result explains a useful architectural separation. The model can propose actions nondeterministically while selected admission invariants remain properties of the host boundary. The result establishes neither K4 nor goal suitability, progress, or absence of unmediated effects. It also exposes the assumptions that an integration must justify before claiming such a separation. A gate that reads stale policy state or an adapter that bypasses the gate does not satisfy the premises.

Task control and effect state remain independent. In this contract, commitment means delivery recorded by the provider, not that a recipient has read or understood the material. A stopped task can contain a committed delivery. If the provider exposes only submission acceptance, the promised postcondition must be narrowed or delivery knowledge remains unknown. Provider-level duplicate prevention similarly requires an idempotency guarantee or an equivalent queryable operation identity; the host journal alone cannot create exactly-once remote execution.

\subsection{Revision, Interruption, and the Meaning of a Valid Trace}
\label{sec:contract-trace}
\label{sec:principles}

The earlier counterexample becomes precise under the contract. Revision 7 proposes $\{a,b\}$; a GUI edit removes $b$ and produces revision 8. A delayed endorsement for revision 7 fails K1. Once the revised proposal receives the required approval, the gate may admit delivery $E_a$. A lost provider response leaves its outcome unknown under K4. Cancellation then advances the epoch and excludes new admissions, but a late receipt can still establish that $E_a$ committed. A faithful final account includes both the stopped remainder and the completed delivery.

Four design principles follow from the trace. \textbf{P1, preserve referential integrity}, keeps revised goals attached to the right objects and effects. \textbf{P2, separate interpretation from authorization}, makes endorsement depend on accountable roles. \textbf{P3, make control operational}, gives interventions consequences for admissible behavior. \textbf{P4, ground feedback in outcomes}, keeps the user's understanding aligned with evidence. In the compact contract these principles become K1--K4, rather than an additional independent checklist.

The same trace can guide development without pretending to be an experiment. It identifies the records an adapter must expose and the distinctions a test oracle must observe. Injecting a stale approval, timeout, or delayed receipt would exercise a different clause. More permissive contracts could reuse unaffected step-level endorsements after revision; doing so would require evidence that the dependency analysis preserved their scope. The conservative invalidation in \contractref{ct:share} makes that trade-off explicit.

\subsection{Assurance as a Versioned Argument}
\label{sec:assurance}

Continuous assurance concerns the standing of a claim about the realization, rather than the mere presence of a monitor. For each claim, define a record
\begin{equation}
\mathcal{A}=\langle q,\Omega,D,E,R,S\rangle,
\qquad
S\in\{\mathsf{supported},\mathsf{refuted},\mathsf{insufficient}\}.
\label{eq:assurance-record}
\end{equation}
Here $\Omega$ identifies the task, interval, or release scope; $D$ records assumptions and versioned dependencies; $E$ contains evidence with producer and observation time; and $R$ is the rule used to assess it. Support is conditional on those premises. A witnessed violation refutes a claim, whereas an unavailable receipt, expired premise, or coverage gap leaves insufficient support. The distinction prevents missing evidence from being interpreted as a successful check.

The assurance argument separates five claims: \emph{endorsement and authority at admission} (Q1), \emph{effective stopping of new admissions} (Q2), \emph{evidence-grounded outcome reporting} (Q3), \emph{continuity without silent replay} (Q4), and \emph{release behavior on a declared assessment scope} (Q5). \tabref{tab:assurance} states their scope, evidence, and checking rules.

\begin{table}[!htbp]
\caption{Five scoped assurance claims with evidence, checking rules, and conditions requiring reassessment.}
\label{tab:assurance}
\centering\small
\begin{tabular}{@{}>{\raggedright\arraybackslash}p{0.24\linewidth}>{\raggedright\arraybackslash}p{0.27\linewidth}>{\raggedright\arraybackslash}p{0.43\linewidth}@{}}
\toprule
Claim and scope & Evidence producer and artifact & Check, invalidation, and response \\
\midrule
Q1: every admitted effect in the observed interval matches endorsement and authority at admission. & Host admission journal; authenticated approval/role ledger; coverage of write gates. & Check K1/K2 and the admission witness. A mismatching record refutes Q1. Lost coverage removes support. Changed policy/ownership or stale grants require a new scope and revalidation for subsequent admissions. \\
Q2: after an acknowledged stop, the old controller admits no new effects. & Task store: stop acknowledgement, lease/epoch transition; admission journal. & Check runnable state, controller lease, and K3 ordering at the gate. A later stale admission refutes Q2. Missing acknowledgement/order evidence is insufficient; fence and reconcile. \\
Q3: every reported committed effect has authoritative matching outcome evidence. & Core commit record or provider receipt; effect journal; user-facing report. & Match effect ID and frozen payload under K4. A premature completion report refutes Q3. A timeout leaves the effect outcome unknown; query before reporting or resubmitting. \\
Q4: a control transition preserves unresolved work without silent replay. & Transfer/recovery record; old/new controller acknowledgements; stable effect IDs. & Check H1--H4 obligations. Missing transfer acknowledgement or an unreconciled submission prevents resumption of affected writes; changed recipient rights invalidate transferred context. \\
Q5: a release meets specified task/interaction properties on its declared regression scope. & Version-bound scenario tests, negative traces, and human-reviewed task outcomes. & Check stated acceptance rules and record coverage. A relevant model, adapter, host, or specification change invalidates affected evidence; rerun or narrow the claim. Passing samples do not establish universal correctness. \\
\bottomrule
\end{tabular}
\end{table}

The timeout trace can support Q1 while leaving delivery unknown. Reporting that uncertainty preserves Q3, whereas asserting delivery without evidence violates it. A late receipt does not refute Q2 when it concerns an admission preceding the stop. The claims therefore express separable obligations that a generic task-success indicator would collapse.

Changes determine which arguments must be reconsidered. Let $\operatorname{Dep}(q,\Omega)$ denote the dependencies needed to support claim $q$ over scope $\Omega$, and let $\Delta$ be the set of changed dependencies. A candidate invalidation rule is
\begin{equation}
\operatorname{Dep}(q,\Omega)\cap\Delta\ne\varnothing
\quad\Longrightarrow\quad
\operatorname{reassess}(q,\Omega'),
\label{eq:invalidation}
\end{equation}
where $\Omega'$ is the proposed scope after the change. An argument remains applicable only if its relevant premises remain valid or are re-established. Historical evidence retains its original attribution; it does not automatically support the new scope. Dependency analysis is itself an assurance obligation: an omitted dependency makes selective invalidation unsound.

The conditional separation in Proposition~\ref{prop:gate} suggests that not every change needs the same response. Replacing a planner may invalidate behavioral evidence for Q5 while preserving an independently established invariant of an unchanged gate. Adding an unmediated write capability invalidates that gate's coverage premise and affects Q1/Q2. Changing an outcome adapter can undermine Q3 even if planning remains unchanged. This is the practical meaning of continuous assurance: a maintained argument across relevant changes, rather than universal retesting after every token.

Design analysis, release evaluation, and operational records offer different evidence. Production-readiness and lifecycle-observability work provides useful foundations~\cite{breck2017mltest,dong2024agentops}; the framework binds their artifacts to the same interaction--effect relation. Its research challenge is to make that binding economical and trustworthy enough to guide maintenance. Evidence provenance, retention cost, monitor completeness, and acceptable uncertainty will influence whether the approach remains usable beyond a small example.

\subsection{What Quality Assessment Must Observe}
\label{sec:evaluation-role}

Assessment must observe task-relevant transitions as well as final outcomes. Executable web, desktop, and mobile environments provide foundations for outcome-based evaluation~\cite{zhou2024webarena,xie2024osworld,rawles2025androidworld}. Stateful tools, business tasks, and side-effect checks broaden that view~\cite{lu2025toolsandbox,yao2025taubench,huang2025crmarena,trivedi2024appworld}. For \aisname, the distinguishing need is to exercise the correspondence across direct and delegated interaction: revisions, role changes, interrupted admissions, and unresolved effects.

Scenario-guided GUI testing~\cite{yu2026scengen} and dual-control environments~\cite{barres2025tau2} suggest ways to construct such cases. Equivalent intentions need not produce identical click sequences, and a justified refusal need not be a failed task. The relevant oracle compares permissible business behavior under equivalent initial conditions. Real users remain necessary for evaluating whether the \toolname makes consequential differences inspectable; user simulation alone can misrepresent human interaction~\cite{seshadri2026simulation}. The next generation of assessment should connect behavioral validity with the effort and decision quality of the people supervising it.

\section{A Research Agenda from Development to Assurance}
\label{sec:directions}

The framework suggests a research program organized around constructing and maintaining the correspondence in Equation~\eqref{eq:conformance}. Application analysis supplies the concrete semantics; interaction design supplies the abstract task model; contracts relate them; evidence supports their continued use. The six directions below develop this program from application analysis to the maintenance of evidence. Their shared artifacts create an opportunity for development assistance and quality assessment to reinforce one another.

\subsection{R1: Recovering Application Abstractions and Interface Contracts}
\label{sec:future-interfaces}

R1 concerns the construction of $\pi$ and the two interface contracts from an existing application. The central difficulty is recovering semantic obligations rather than merely enumerating callable operations. A tool should distinguish an object's stable identity, the conditions that authorize its modification, and the evidence that establishes the resulting effect. GUI analysis, testing, and intent inference offer candidate sources for this information~\cite{yu2025vision,yu2024platform,yu2026intention}. Their outputs would need explicit links to host implementations and reviewable uncertainty.

This direction also concerns abstraction adequacy. A representation that omits a recipient or an intermediate disclosure may admit a false correspondence even when every represented transition is valid. Development support should therefore help identify distinctions that the domain considers consequential, expose unsupported clauses, and maintain the resulting schema as the application evolves. The relevant benefit is a reduction in total specification and maintenance effort while retaining those distinctions.

\subsection{R2: Maintaining Task Meaning under Context Change}
\label{sec:future-context}

R2 concerns how $g$, $b$, $d$, and $U$ evolve together. Rebuilding the entire task after every GUI edit is disruptive, while preserving every prior endorsement is unsafe. A useful task representation would identify which decisions depend on a changed source, object version, preference, or recipient. It could then distinguish a harmless presentation change from one that requires a new endorsement or invalidates an earlier plan.

The difficult cases involve dependencies absent from a formal schema. A renamed document can retain its meaning, whereas an unchanged filename can conceal different disclosure implications. Research is needed on combining explicit domain relations, inferred dependencies, and user correction without treating model confidence as proof of equivalence. Such representations must also support rights-filtered continuity in H1--H4, so that preserving task meaning does not imply preserving all private context.

\subsection{R3: Enforcing Delegation across Heterogeneous Effect Boundaries}
\label{sec:future-delegation}

R3 concerns the realization of $\mathsf{Pre}$, $\mathsf{Step}$, and $\mathsf{Inv}$ where operations cross APIs, GUI automation, and remote services. Runtime enforcement offers mechanisms for selected rules~\cite{wang2026agentspec}; the larger challenge is ensuring that they refer to the current task and the actual admission boundary. A capability description should reveal whether the provider supports atomic checks, duplicate prevention, status queries, and compensation.

This suggests conformance profiles with explicitly different guarantees. A local versioned edit can support stronger interference checks than an external delivery service with opaque commitment. Comparing these profiles through adversarial traces would clarify which guarantees survive retries, revocation, and controller transfer. The research objective is a principled basis for accepting, restricting, or declining delegated effects under the provider's actual semantics.

\subsection{R4: Assisting Development through Shared Semantic Artifacts}
\label{sec:future-development}

R4 concerns development tools that use the same contract artifacts as assurance. An admission predicate can guide adapter generation; a prohibited transition can generate a negative test; a failed assurance claim can locate a missing record or an invalid assumption. Existing work on test generation and migration, declarative language-model programs, and prompt comparison provides complementary techniques~\cite{yu2023scripts,khattab2024dspy,arawjo2024chainforge}. The distinctive opportunity is to connect these activities through maintained task and effect semantics.

Generated specifications cannot serve as independent oracles merely because they are expressed formally. A tool that infers a contract from faulty code may reproduce the fault. Research should distinguish recovered implementation behavior, intended requirements, and reviewed contract clauses, while measuring the effort of resolving disagreements. Developer experience should include subsequent host changes and evidence maintenance, not only the speed of producing an initial integration.

\subsection{R5: Assessing Conformance and Maintaining Evidence}
\label{sec:future-quality}

R5 concerns which evidence justifies a scoped conformance claim. Tests can explore traces, static or runtime analysis can establish selected invariants, and human assessment can judge task adequacy and inspectability. Their scopes differ. Metamorphic relations~\cite{chen2018metamorphic}, for example, could test whether meaning-preserving request variants retain permitted business effects, but identifying such variants requires domain judgment beyond the tested model.

Evidence dependencies introduce a second research problem. Selective reassessment is useful only if the dependency model catches relevant changes without overwhelming developers with unnecessary work. Longitudinal evaluations should therefore consider missed invalidations, unnecessary reassessments, diagnostic value, and maintenance cost. A system that produces frequent passing checks while silently losing the premises of its claims would fail the purpose of continuous assurance.

\subsection{R6: Preserving Human Authority and Sustainable Integration}
\label{sec:future-ecosystem}

R6 concerns the allocation of authority and burden across the people affected by delegated work. The requester, owner, approver, and operator may have different interests; a common protocol does not resolve them. Portable role and consent descriptions could help preserve obligations across organizational boundaries, provided that changes of context trigger explicit revalidation rather than implicit inheritance.

The same perspective applies to deployment and model substitution. Interchangeable interfaces can conceal different data exposure, latency, operating cost, and recovery behavior. Sustainable integration requires assessing those changes together with the effort of supervision and maintenance. The long-term objective is software in which delegation remains useful and accountable as both the host and its intelligent components evolve.

Taken together, the directions connect the implications developed in \secref{sec:challenges} with the principles derived in \secref{sec:principles}. \tabref{tab:derivation} records these connections alongside their literature premises and motivating perturbations. \figref{fig:ais-roadmap} projects the same challenge--direction links into a visual overview. The links are a design synthesis: they identify shared research work without assigning empirical weights or excluding additional relationships.

\begin{table}[!htbp]
\caption{Synthesis of the engineering implications (C1--C8, \secref{sec:challenges}), design principles (P1--P4, \secref{sec:principles}), and research directions (R1--R6, \secref{sec:directions}). The literature premises and perturbations explain the proposed connections.}
\label{tab:derivation}
\centering\small
\begin{tabular}{@{}>{\raggedright\arraybackslash}p{0.16\linewidth}>{\raggedright\arraybackslash}p{0.25\linewidth}>{\raggedright\arraybackslash}p{0.33\linewidth}>{\raggedright\arraybackslash}p{0.16\linewidth}@{}}
\toprule
Area & Literature premise & Perturbation and derived obligation & Principles; directions \\
\midrule
\textbf{C1} Task abstraction & Goal uncertainty and user participation~\cite{horvitz1999mixed,he2025plan}. & Change ``prepare'' to ``send'': distinguish inferred intent from delegated effects. & P1, P2; R1, R2, R3, R6 \\
\textbf{C2} Interfaces & Tool discovery/calling and product integration~\cite{li2023apibank,parnin2025copilot}. & Change a tool's effect without its schema: specify capability and task-control contracts. & P1--P4; R1, R4, R5 \\
\textbf{C3} Context and memory & Personalized and knowledge-dependent actions~\cite{cai2025personalwab,shi2026tauknowledge}. & Replace a document or preference: retain identity, provenance, and invalidation dependencies. & P1, P2; R2, R3, R5 \\
\textbf{C4} Continuity & Shared plans and stateful interaction~\cite{feng2026cocoa,lu2025toolsandbox}. & Edit in the GUI or change controller: reconcile task state and ownership. & P1, P3; R2, R3, R5 \\
\textbf{C5} Authority and safety & Untrusted tool content and runtime policies~\cite{debenedetti2024agentdojo,wang2026agentspec}. & Introduce an external recipient or revoke an owner grant: revalidate authority at admission. & P2; R3, R5, R6 \\
\textbf{C6} Execution and recovery & Outcome-based evaluation and compensation~\cite{trivedi2024appworld,garciamolina1987sagas}. & Lose a delivery response: distinguish unknown, failed, committed, and compensable effects. & P3, P4; R1, R3, R5 \\
\textbf{C7} Observability & Agent debugging and lifecycle traces~\cite{epperson2025debugging,dong2024agentops}. & Present a success message without a receipt: preserve evidence lineage for diagnosis and feedback. & P4; R4, R5, R6 \\
\textbf{C8} Architecture and evolution & Learned-system dependencies and production maintenance~\cite{sculley2015debt,rajbahadur2026fmware}. & Replace a model, adapter, or host version: invalidate affected behavioral evidence. & P1, P4; R1, R4, R5, R6 \\
\bottomrule
\end{tabular}
\end{table}

\begin{figure}[!htbp]
\centering
\includegraphics[width=\linewidth]{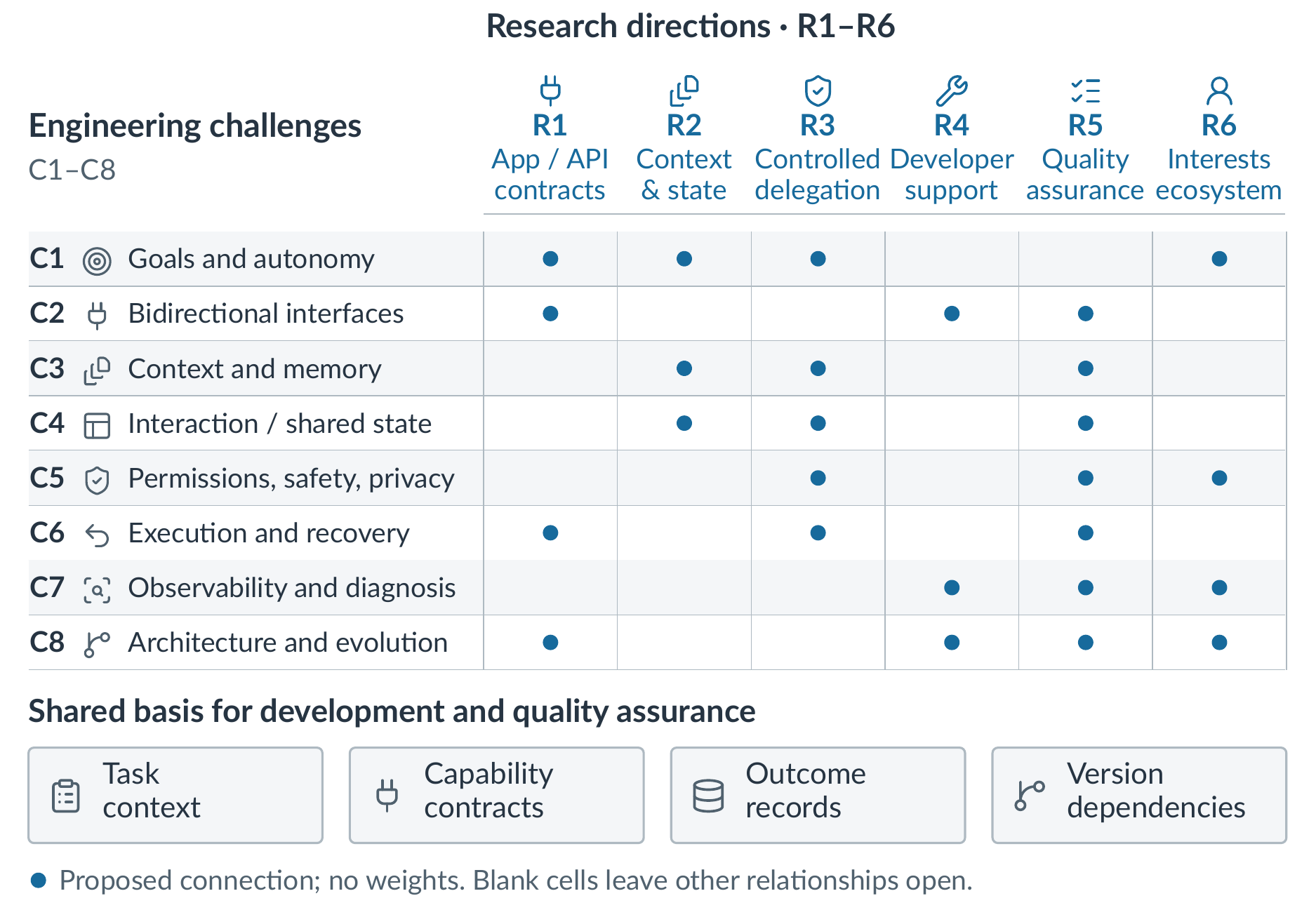}
\caption{Connections between engineering implications C1--C8 and research directions R1--R6, grounded in \tabref{tab:derivation}. The 26 unweighted links identify shared research concerns; labels abbreviate the subsection topics. Task context, contracts, outcome records, and version dependencies connect development assistance with assurance; blank cells do not exclude further relationships.}
\label{fig:ais-roadmap}
\Description{An eight-row, six-column matrix uses labeled icons and 26 dots to connect challenges with interfaces, context and state, controlled delegation, developer support, quality assurance, and user interests and ecosystem. The dots have no weights or empirical interpretation. Four tiles below identify shared task context, capability contracts, outcome records, and version dependencies.}
\end{figure}

\section{Discussion}
\label{sec:discussion}

\subsection{Scope across Software Domains}
\label{sec:discussion-costs}

The framework's scope follows from the relationships it represents, rather than the surface form of a task. \tabref{tab:domains} instantiates the same questions for four domains: what constitutes a binding, whose authority matters, which event changes the permissible continuation, and what evidence establishes an outcome. The examples are hypothetical design probes. Their value is to expose different interpretations of the shared model, including cases in which a host cannot support a desired contract.

\begin{table}[!htbp]
\caption{Domain interpretations of task bindings, authority, and continuity in four hypothetical design probes.}
\label{tab:domains}
\centering\small
\begin{tabular}{@{}>{\raggedright\arraybackslash}p{0.19\linewidth}>{\raggedright\arraybackslash}p{0.34\linewidth}>{\raggedright\arraybackslash}p{0.41\linewidth}@{}}
\toprule
Application and task & Binding and authority & Intervention/continuity probe and limitation \\
\midrule
Collaboration: share meeting materials & Exact views, recipients, release-policy version; organizer, document owners, and disclosure approver. & Remove a recipient after preview. Rebind approval; an admitted delivery can still commit after cancellation. A received disclosure is not undone by deleting a sent item. \\
Spreadsheet: clean a shared data range & Workbook version, stable row identities, formulas/protected ranges; task requester and workspace policy. & The user sorts or edits while the agent prepares a patch. Reconcile by row identity and pre-values; undo must not overwrite a later collaborator edit. Positional cell coordinates alone are insufficient. \\
IDE: apply a refactoring & Worktree/branch, file versions, proposed diff, command scope; developer rights and repository policy. & The developer edits a file or takes over while a remote worker runs. Fence that worker and reconcile its patch/processes. Local edit approval does not authorize merge, deployment, or arbitrary shell effects. \\
Service console: refund a payment & Payment/order state, amount, beneficiary, policy threshold; operator, account/resource authority, finance approver, affected customer. & Transfer the case to another employee after a timeout. Recheck the recipient employee's rights and query the stable refund ID. Without provider duplicate prevention, blind replay risks a second refund. \\
\bottomrule
\end{tabular}
\end{table}

The differences are substantial. Spreadsheet correctness depends on stable row identity and interference with later edits; an IDE must distinguish a local patch from authority to run commands or deploy; a refund depends on financial policy and the provider's duplicate-prevention semantics. Thus $\mathsf{Post}$ and the assumptions in $\mathsf{Dep}$ are domain-specific, even when task revision and endorsement have a common representation. In each case, the abstraction must preserve consequential intermediate effects as well as the final state. A workflow that briefly discloses confidential information and then deletes it is not equivalent to one that never disclosed it.

The H1--H4 distinction follows the same logic. An editable spreadsheet continuation supports operational takeover, while transferring a service case changes accountable principals. Remote IDE continuation requires durable task and process identity; recovery after a refund timeout requires provider evidence. A common record enables analysis across these cases, but their authority and continuation rules remain different. This is the intended generality of the framework: shared questions and semantic relations with explicit domain interpretations.

\subsection{Human Authority and the Cost of Supervision}
\label{sec:discussion-control}

An inspectable abstraction does not ensure that people can supervise it effectively. A long, technically complete preview can conceal the one changed recipient that matters. The choice of mandatory review must therefore have an accountable source. Application and domain owners define required classes, \eg cross-boundary disclosure, financial commitment, or destructive changes to shared resources. Resource policies add further conditions. A requester may narrow delegation or demand additional review, but cannot waive another principal's restriction. A model can flag uncertainty; it cannot independently downgrade a required gate.

Reversibility, affected people, disclosure scope, commitment cost, unfamiliar authority, and unresolved intent provide candidate dimensions for these decisions. Their interpretation depends on the domain. Formatting a private draft and refunding a payment need different thresholds, while ambiguity about what the user wants differs from uncertainty about a retrievable fact. Human--AI interaction guidance supports correction and usable scoping~\cite{amershi2019guidelines}; the framework locates the corresponding decisions in the task and authority model. When a required classification cannot be established, preparation can continue only within the remaining non-effectful scope.

Supervisory burden is consequently multidimensional. Active inspection time, intervention frequency, context switches, and recovery effort capture different costs; latency and computation are separate measures. Their interpretation requires task outcomes and decision quality, including consequential mismatches approved without detection and interruptions unnecessary under the applicable policy. Reducing the number of confirmation dialogs alone can improve convenience while weakening control. Future human studies should test whether the \toolname makes meaningful differences easier to recognize and act upon for a specified user population.

\subsection{Architectural Scope and Theoretical Limits}
\label{sec:discussion-boundary}
\label{sec:discussion-related}
\label{sec:discussion-models}

\aisname and \toolname operate at different levels of description. The former identifies a software pattern with a continuing core and dual interaction paths; the latter specifies the task semantics through which delegated activity becomes inspectable and controllable. Neither implies a single software module. Their overlap with mixed-initiative or intent-oriented systems does not diminish the need to define the application-level relation, but it does place the burden of contribution on the explanatory and engineering value of that relation. GUI-agent engineering already motivates attention to safety, recovery, and maintenance~\cite{yu2026guiengineering}; the framework gives those concerns a shared semantic object within the application.

The formalization makes selected obligations precise while leaving three limits visible. First, the abstraction must be adequate: a relevant effect omitted from $Y$ cannot be assessed through its traces. Second, its realization must expose trustworthy bindings and evidence; stronger models cannot reconstruct unavailable commit receipts or authoritative permission changes. Third, safety conformance does not imply progress, semantic usefulness, fairness, or usability. Those properties require further assumptions and evaluation. The propositions establish conditional consequences of the definitions and gate assumptions, not verified properties of a deployed system.

These limits also determine adoption choices. An application with inaccessible third-party state or unmediated write paths may support a restricted contract. A deterministic workflow may offer lower cost for a stable goal and procedure. The strongest role for \aisname is therefore not universal replacement of conventional interaction, but selective delegation whose assumptions and consequences remain available to the people who depend on the application.

\subsection{What Would Establish the Value of This Perspective?}
\label{sec:discussion-limitations}

The central hypothesis is that maintaining interaction--effect obligations as shared artifacts can expose defects and dependencies that isolated UI, agent, and API descriptions leave implicit. Its evaluation should compare the effort and outcomes of building and evolving representative integrations with and without those artifacts. Relevant evidence would concern missed binding changes, authority violations, recovery errors, evidence invalidation, and the quality of user interventions. The existing examples identify such observations without supplying empirical results.

The hypothesis could fail in several informative ways. Existing application contracts may already maintain the same relationships at comparable effort. Annotation and evidence costs may outweigh the defects prevented. Users may overlook consequential changes despite a more faithful abstraction. Such findings would narrow the useful scope of the framework and guide lighter realizations. The present perspective provides definitions, conditional arguments, and an agenda for investigating these possibilities; it does not claim an implemented runtime, demonstrated productivity gains, or cross-domain validation.

\section{Conclusion}
\label{sec:conclusion}

Integrating an intelligent agent into an existing application introduces an enduring relationship between direct operations, delegated tasks, and shared business state. We propose \aisname as the software pattern that contains this relationship and \toolname as its task-level interaction abstraction. Modeling their correspondence makes a central obligation explicit: task revisions, role-specific authority, operational control, and outcome evidence must remain connected throughout execution.

Interaction contracts specify that obligation, while continuous assurance maintains the standing of claims about its realization as dependencies change. The resulting research agenda joins application analysis and development assistance with control semantics, quality assessment, and human supervision. Progress should be judged by whether this shared abstraction helps developers build and maintain useful delegation at acceptable cost, while preserving the user's ability to understand and influence what the application does.

% Draft provenance: AI assistance was used for literature discovery, manuscript
% drafting, original conceptual diagrams, and LaTeX checks. Author review and any venue-specific disclosure
% remain necessary before submission. No new empirical study is reported.
% Funding, conflicts of interest, and contributor-role declarations are
% author-owned information and have not been inferred or supplied here.

%\begin{acks}
%The authors would like to thank the anonymous reviewers for their insightful comments.
%This work is partially supported by 
%the National Key Research and Development Program of China ().
%the National Natural Science Foundation of China ().
%\end{acks}

\bibliographystyle{ACM-Reference-Format}
\bibliography{main}

\end{document}